\theoremstyle{definition} 
\theoremstyle{definition} 
\newtheorem {theorem} {Theorem}
\newcommand{\bk}[1]{\braket{#1|#1}}
\newcommand{\kb}[1]{\mathbf{\left[#1\right]}}
\newcommand{\M}{\texttt{Measure-Resend}}
\newcommand{\R}{\texttt{Reflect}}
\newcommand{\Hmin}{H_\infty}
\newcommand{\trd}[1]{\left|\left|{#1}\right|\right|}
\title{Semi-Quantum Random Number Generation}
\author{\IEEEauthorblockN{Julia Guskind}
\IEEEauthorblockA{\textit{Department of Computer Science and Engineering} \\
\textit{University of Connecticut}\\
Storrs, CT USA}
\and
\IEEEauthorblockN{Walter O. Krawec}
\IEEEauthorblockA{\textit{Department of Computer Science and Engineering} \\
\textit{University of Connecticut}\\
Storrs, CT USA \\
walter.krawec@uconn.edu}
}
\begin{document}
\maketitle

\begin{abstract}
Semi-quantum cryptography involves at least one user who is semi-quantum or ``classical'' in nature.  Such a user can only interact with the quantum channel in a very restricted way.  Many semi-quantum key distribution protocols have been developed, some with rigorous proofs of security.  Here we show for the first time, to our knowledge, that quantum random number generation is possible in the semi-quantum setting.  We also develop a rigorous proof of security, deriving a bound on the random bit generation rate of the protocol as a function of noise in the channel.  Our protocol and proof may be broadly applicable to other quantum and semi-quantum cryptographic scenarios where users are limited in their capabilities.
\end{abstract}
\begin{IEEEkeywords}
  Quantum Cryptography, Quantum Random Number Generation, Quantum Information Theory, Security
\end{IEEEkeywords}

\section{Introduction}

Quantum Random Number Generators (QRNG) are an important cryptographic tool.  By utilizing the physical randomness of a quantum source, such protocols can distill cryptographically-secure random strings which are, themselves, important necessities for other cryptographic primitives (e.g., encryption, key distillation, and so on).  By now there are many QRNG protocols along with several security models.  Security models range from the extreme device independent scenario \cite{qrng-di-1,qrng-di-2,qrng-di-exp1,qrng-di-exp2} (where all measurement and source devices are untrusted) to the fully trusted scenario (where all devices are trusted and completely characterized).  An interesting middle-ground is the \emph{source independent} model whereby measurement devices are trusted, but source devices may be offloaded to an untrusted party \cite{vallone2014quantum,haw2015maximization,xu2019high}.  Such systems lead to a good security middle-ground while also providing fast experimental bit generation rates \cite{avesani2018source,drahi2020certified}.  For a general review of QRNG protocols the reader is referred to \cite{herrero2017quantum}; for a more general review on quantum cryptography, including theoretical and experimental developments, the reader is referred to \cite{QKD-survey1,QKD-survey2,amer2021introduction}.

Semi-quantum cryptography was first introduced in \cite{SQKD-first} for the key-distribution problem (QKD).  In general QKD protocols, like all other quantum cryptographic protocols, require parties to be ``fully-quantum'' or ``quantum-capable.'' Namely, they must be able to manipulate quantum bits in arbitrary ways.  Semi-quantum protocols involve at least one user who has restrictions and is almost classical in nature.  This restricted user is only able to interact with the quantum channel in a limited, classical, way: the user may measure and send in a single, publicly known basis (usually the $\ket{0},$ $\ket{1}$ basis) or to ignore the quantum channel, disconnecting from it, and returning all received signals back to the sender undisturbed.  Clearly, if all parties were restricted to these operations, the entire protocol would be mathematically equivalent to a classical one.  Thus, the restricted party is often called the ``classical user.''   The goal of any semi-quantum protocol is to achieve the same end-result as the original fully-quantum version (e.g., unconditionally secure key distribution) but using fewer resources.  In a way, they help us to study the ``gap'' between classical and quantum protocols and help us answer the question ``how quantum'' do protocols need to be to gain an advantage over its classical counterpart?

Most semi-quantum protocols are restricted to key-distribution \cite{SQKD-second,zou2009semiquantum,krawec2014restricted,amer2019semiquantum,zhang2017fault,krawec2015mediated,tsai2018semi,vlachou2018quantum,iqbal2020high}, however there are other primitives that are also available, in particular secret sharing \cite{secret-1,secret-2,secret-3}, secure direct communication \cite{SDC-1,SDC-2,SDC-3,SDC-4,SDC-5}, private comparison \cite{priv-comp-1,priv-comp-2}, and identity authentication \cite{ident-1,ident-2}.  Semi-quantum protocols can also be experimentally feasible \cite{boyer2017experimentally,massa2022experimental}.  For a general survey of semi-quantum cryptography, the reader is referred to \cite{SQKD-survey}.  To our knowledge, no semi-quantum random number generation (SQRNG) protocol is available.  We prove in this paper that QRNG is a viable semi-quantum primitive by designing the first SQRNG protocol and also deriving a rigorous information theoretic proof of security for it.

We make several contributions in this work.  We develop a new, and to our knowledge the first, semi-quantum random number generation protocol allowing a ``classical'' or semi-quantum user to generate a cryptographically-secure random number.  This is done using a fully-quantum server as a source and partial measurement device.  However, this server need not be trusted and, in fact, may be fully controlled by an adversary.  We develop an information theoretic proof of security for our protocol and derive an asymptotic random-bit generation rate, as a function of observed noise in the channel connecting the semi-quantum user to the adversarial server.  Our proof of security involves reducing the semi-quantum protocol to an entanglement-based version and deriving a bound on the quantum entropy of the resulting system.  The protocol and security proof methods may be important foundational building-blocks for future work in (semi) quantum cryptography.

\section{Preliminaries}

In this section we discuss some basic notation and concepts.  Let $x\in\{0,1\}^n$, then we define $ct_i(x)$ to be the number of times $i\in\{0,1\}$ appears in the string $x$.  Given a quantum state $\ket{\psi}$ we write $\kb{\psi}$ to mean $\ket{\psi}\bra{\psi}$.  Given a state $\rho_{AE}$ acting on some Hilbert space $\mathcal{H}_A\otimes\mathcal{H}_E$, we write $\rho_A$ to mean the state resulting from the partial trace over the $E$ system, namely $\rho_{A} = tr_E\rho_{AE}$.  Similarly for multiple systems.

The \emph{von Neumann entropy} of a state $\rho_A$ is defined to be $S(A)_\rho = -tr(\rho_A\log\rho_A)$ where all logarithms in this paper are base two unless otherwise specified.  Given state $\rho_{AE}$ we write $S(A|E)_\rho$ to mean the conditional von Neumann entropy defined to be $S(A|E)_\rho = S(AE)_\rho - S(E)_\rho$.

We will be working with quantum random number generation (QRNG).  For this, a quantum protocol is run whereby Alice and Eve holds two registers - Alice's register is a classical system containing her \emph{raw random string} while Eve's system is quantum and arbitrarily entangled.  This raw random string may not be uniform or completely independent of Eve, thus a further post-processing step is required.  Given a classical-quantum state $\rho_{AE}$, we are interested in how much uniform randomness, independent of $E$, can be extracted from $A$.  The process of \emph{privacy amplification} can be used to extract this randomness.  Namely, if one chooses a random two-universal hash function sending $N$-bit strings to $\ell$-bit strings, and maps the $A$ register through this hash function, resulting in $\rho_{F(A)E}$, then it was shown in \cite{renner2008security} that:
\begin{equation}
  \trd{\rho_{F(A)E} - \frac{I}{2^\ell}\otimes \rho_E} \le 2^{-\frac{1}{2}(\Hmin^\epsilon(A|E)_\rho - \ell)} + 2\epsilon,
\end{equation}
where $\Hmin^\epsilon(A|E)$ is the smooth quantum min entropy \cite{renner2008security}.  Thus, to ensure that the resulting system is $\epsilon'$-close to a uniform random string, independent of any adversary system, one requires a bound on the quantum min entropy.  In the asymptotic scenario (which we consider in this paper), where $|A|\rightarrow \infty$ and $\epsilon \rightarrow 0$, and assuming collective attacks whereby $\rho_{AE}$ takes a product form (and so $\rho_{AE} = \sigma_{AE}^{\otimes N}$), it holds that $\frac{1}{N}\Hmin^\epsilon(A|E)_{\rho} \rightarrow S(A|E)_\sigma$ (see \cite{tomamichel2009fully}).  From this, it is clear that:
\begin{equation}\label{eq:entropy-bit-rate}
\lim_{\epsilon\rightarrow 0}\lim_{N\rightarrow \infty}\frac{\ell}{N} = S(A|E)
\end{equation}
Note the similarities to the Devetak-Winter key-rate for QKD \cite{QKD-Winter-Keyrate}; the only difference is that there is no need to worry about error correction leakage.

\section{The Protocol}
We now introduce our SQRNG protocol.  The protocol consists of a semi-quantum user Alice and a fully quantum server capable of creating quantum signals and measuring them in the $X$ basis (see Figure \ref{fig:protocol}).  A semi-quantum user is restricted to performing operations in a single, publicly known, basis (usually the computational basis $\{\ket{0}, \ket{1}\}$) or to ignoring the quantum signal and reflecting it undisturbed.  Formally, the semi-quantum user is restricted to performing one of the following two operations on every qubit received from some sender:
\begin{enumerate}
\item $\M$: the semi-quantum user subjects the qubit to a $Z$ basis measurement resulting in outcome $r \in \{0,1\}$.  A qubit in the state $\ket{r}$ is then returned to the sender.
\item $\R$: the semi-quantum user ignores the qubit and reflects it, undisturbed, back to the sender.
\end{enumerate}

\begin{figure}
    \centering
    \includegraphics[width=.35\textwidth]{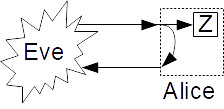}
    \caption{The proposed SQRNG protocol.  An untrusted quantum server (Eve) sends quantum signals to semi-quantum or ``classical'' Alice.  If honest, the server should send $\ket{+}$.  Alice may only measure in the computational $Z$ basis or reflect the incoming signal.  The server, if honest, will measure the returning signal in the $X$ basis and report the outcome.  We prove security assuming the server is adversarial and does not necessarily prepare the correct state or perform the correct measurement.}
    \label{fig:protocol}
\end{figure}

Since the quantum server is untrusted, we call the server Eve.  Though we will later, in our security analysis, assume the server is adversarial, for the purposes of this section, we explain the protocol's operation assuming an honest server.  The protocol operates over several rounds, a single round consisting of the following actions:
\begin{enumerate}
    \item The server prepares the state $\ket{+}$ and sends it to Alice.
    \item Alice chooses randomly to $\R$ the signal or to $\M$, (recording the measurement result).
    \item Upon receiving Alice's qubit, the server makes an $X$ basis measurement and reports the outcome over a classical channel (this channel need not be authenticated).
      \begin{itemize}
      \item If Alice chose $\R$, she should expect the message from the server to be ``$+$'' and any other response will be considered noise and will be later factored into our security analysis.
        \item If Alice chose $\M$, she will have a classical measurement outcome $a \in \{0,1\}$.  Ideally, the server's message will be either ``$+$'' or ``$-$'' randomly in this event.
      \end{itemize}
\end{enumerate}

The above quantum portion of the protocol is repeated $N$ times leading to a raw random string of size $n \le N$.  This raw random string is the result of Alice's measurements when she chose $\M$.  However, it may not be uniform (due to adversarial or natural noise) and Eve may have some side information on it through her quantum ancilla and her operation as the server (i.e., she need not perform an honest $X$ basis measurement on step 3).  Thus, it is next processed through privacy amplification, hashing the raw string using a two-universal hash function.  As proven in \cite{frauchiger2013true}, for QRNG, this hash function need only be chosen once and thus additional randomness is not needed on Alice's part to choose this hash function.

It is worth discussing how much seed randomness Alice needs to execute the above protocol.  She requires some secret random string in order to make her operational choices each round.  We may consider the protocol differently, in that Alice will choose a subset $t$ of size $m$ which indexes those rounds (out of the $N$ total rounds) that will be used for testing, namely those rounds where she chooses $\R$.  For all rounds not in $t$ she will choose $\M$.  Choosing this subset requires $\log{N\choose m}$ bits of seed randomness.  This seed may be replenished by the random string produced after privacy amplification.  However, as we are considering the asymptotic scenario in this paper, we may choose $m$ to be arbitrarily small compared to $N$ and, so, as $N\rightarrow \infty$, it will hold that $\frac{1}{N}\log{N\choose m} \rightarrow 0$ and so we will not consider this in our subsequent analysis.  In the finite signal scenario, of course, it would be important.

%Thus, later, when we evaluate the overall bit generation rate of the protocol, we must deduct this additional factor to take into account that each run of the protocol requires a replenishment of the random seed value needed to correctly operate the protocol.  Thus, like every other QRNG protocol, our protocol is really a randomness expansion protocol - one that requires a small random seed upon which additional random bits may be produced in an information theoretically secure manner.

%We comment that our protocol, in the ideal case, is expected to discard half of the $\M$ instances.  Thus, the raw random string cannot be greater than $\frac{1}{2}N$.  This can be improved in an obvious way by having Alice keep those rounds where the server sends the message ``$+$.''  However, our security proof later requires Alice discard these rounds.  Thus, our protocol is perhaps not as efficient as it could be; however we gain a full proof of security against general attacks in the finite key setting - a first for any semi-quantum protocol.  Whether or not the more efficient version of the protocol is also secure remains an open question.  We note that the first M-SQKD protocol required discarding half the rounds for a security proof (CITE) and it wasn't until seven years later that this requirement was proven not necessary through an alternative proof method (CITE).

\section{Security Analysis}
We now prove security of our protocol against general attacks in the asymptotic scenario.  The security model used by our proof assumes the following:
\begin{enumerate}
\item The server is adversarial and may produce any arbitrary quantum state in Step 1 of the protocol.  This state may be entangled with the server's private ancilla.  We assume the dimension of the quantum state actually sent to Alice is $2^N$ where $N$ is the number of rounds used by the protocol (a parameter set by the user Alice).  That is, the server is allowed to perform any general attack in Step 1, however we assume each round consists of an ideal qubit.
\item The channel connecting the server to Alice is noisy but not lossy.  However, the server, may replace the channel with a perfect one (as is typically assumed in quantum cryptographic proofs) and thus any noise detected by Alice is the result of an adversary.  This assumption is to the benefit of the adversary as any natural channel noise can only cause more uncertainty for Eve.

\item Alice's measurement devices are ideal and qubits are ideal.  Thus, we are not considering practical attacks such as photon tagging attacks \cite{SQKD-photon-tag,SQKD-photon-tag-comment}

\item On Step 3 of the protocol, the server may perform any quantum operation on the returning $N$ qubit signal and her ancilla from Step 1.  However, the server must send a classical message of size $N$-bits (a one-bit message for every round of the protocol).  This attack is modeled as a \emph{quantum instrument} \cite{davies1970operational}.

  \item While there may be third party adversaries, we assume that they collude with the server.  Thus, any third-party adversary attack may be ``absorbed'' into the adversarial server's attack strategy and we need only prove security against adversarial servers.  Note that this includes attacks by third-party adversaries against the unauthenticated classical channel connecting the server to Alice.  This assumption is to the benefit of the adversary.
\end{enumerate}

%%The first and fourth assumptions imply that Eve is allowed to perform any arbitrary general attack against the protocol, however we do assume each round consists of a (potentially arbitrary) qubit.  This is similar to the semi-source independent model as introduced in \cite{vallone2014quantum}, making our protocol partially source independent.

Note, though we are proving security against general attacks here, we are not considering practical attacks.  While practical attacks are important, proving security against general attacks even in the ideal-qubit scenario is challenging in itself and usually a first step (especially for semi-quantum protocols).  Mirror-style adaptions may be useful here \cite{boyer2017experimentally,krawec2018practical}; however, we consider practical attacks as interesting future work, but out of scope for this paper.

Our proof works in two stages.  First, we show how our protocol can be reduced to an equivalent entanglement based version.  For this we will adapt methods similar to those we developed in \cite{guskind2022mediated} for a particular semi-quantum key distribution protocol.  This entanglement protocol is actually a QRNG protocol (i.e., a fully-quantum protocol), where the user is now fully quantum.  Nonetheless, we show that security of that QRNG protocol implies security of the SQRNG version.  The second step proves security of this entanglement based version (thus proving security of the actual semi-quantum prepare-and-measure protocol we developed).  Note that, while we only consider this particular SQRNG protocol, we suspect our methods and our proof methodology may be broadly applicable to other quantum cryptographic protocols, especially those with device limitations and two-way quantum communication, which are generally a challenge to prove using standard proof techniques.

\subsection{Reduction to an Entanglement Based Protocol}
We first show how our SQRNG protocol may be reduced to an entanglement based version denoted by e-QRNG.  Our reduction is based on methods we developed in \cite{guskind2022mediated} for \emph{mediated SQKD} protocols (key-distribution protocols using a quantum server \cite{krawec2015mediated}); we show here that these methods can be extended to work with our SQRNG protocol.

We first comment that the SQRNG protocol may be ``purified'' in the following sense.  Instead of Eve preparing each qubit individually, she prepares a large $N$-qubit state where $N$ is the number of rounds used by the protocol, possibly entangled with her private ancilla.  There is no assumption that this state be a product state.  Next, Alice, on receipt of these qubits, will choose a subset $\Theta$.  For each $i=1, \cdots, N$, if $\Theta_i = 0$, she will $\R$ the $i$'th signal (namely perform an identity operation); otherwise she will $\M$ the qubit.  However, instead of actually performing a destructive measurement at this point, Alice will simply apply a CNOT gate with the control register being the $i$'th qubit sent from the server and the target being a blank qubit cleared to the $\ket{0}$ state.  She may then later measure this private register in the $Z$ basis.  It is clear that this later measurement will simulate the case where Alice measures the qubits immediately.  Following this operation by Alice (either the Identity operation or the CNOT operation), the entire signal of $N$ qubits returns to Eve who performs a quantum instrument mapping the $N$ qubits and her initial private ancilla to a classical $N$-bit message space (modeled as a quantum ancilla) and an updated ancilla from which Eve will attempt to learn as much as possible about Alice's measurement results (or, rather, Alice's private register which she will subsequently measure).  Quantum instruments may be used to model scenarios where a quantum state undergoes some quantum operation (including potential measurements) yielding a classical and quantum state as output. The classical part represents the message Eve sends to Alice while the quantum part represents Eve's ancilla in the event she chose to send that particular message.  In the purified case, this instrument may be dilated to an isometry (and subsequently a unitary) operator using standard techniques \cite{wilde2011classical}.  The fact that this purified version is identical to the standard prepare-and-measure one follows through standard methods; the advantage to it is that the quantum state remains a pure state which will be easier to argue about in terms of security.

Now, we show how this purified SQRNG protocol can be reduced to an entanglement QRNG, denoted e-QRNG.  The e-QRNG protocol is a fully quantum protocol (not a semi-quantum one) with three parties: Alice, a Trusted Server (who may also be Alice), and Eve.  Alice will have choices similar to the SQRNG case, namely $\M$ and $\R$ though these operations are different here and, in fact, have no direct operational meaning in the e-QRNG protocol.  We show later that, the operation of $\R$ in the e-QRNG case (which is not actually a reflection but a particular measurement) will ``simulate'' a true $\R$ operation in the SQRNG case; similarly for the $\M$ operation.

The entanglement protocol (e-QRNG) operates as follows:
\begin{enumerate}
    \item A quantum source, Eve, prepares an entangled state $\ket{\tau}_{ACE} \in \mathcal{H}_A \otimes \mathcal{H}_C \otimes \mathcal{H}_E$.  The $A$ and $C$ registers are each of dimension $2^N$ for user specified $N$ (as before, $N$ is the number of rounds used by the protocol).  The $A$ system is sent to Alice while the $C$ register is sent to the trusted server (who may also be Alice).  Eve keeps the $E$ register private.
    \item Alice, on receipt of the state, has two choices for each round (i.e., each of the $N$ qubits), as in the SQRNG protocol.  We call these two choices here $\M$ and $\R$ to keep the notation consistent with the SQRNG case, however the operations are different in the e-QRNG case.  If Alice chooses $\R$, she will measure that qubit in the $X$ basis, and \emph{abort the entire protocol} if she observes $\ket{-}$. That is, to simulate a true ``reflection'' in the semi-quantum case, $A$ will only continue with the protocol if she measures and observes $\ket{+}$.  We say Alice \emph{accepts} the state if she observes $\ket{+}$. Alternatively, if Alice chooses $\M$, she will measure that particular qubit in the $Z$ basis.  %In more detail, let $\Theta$ be her choice of operation for each qubit where $\Theta_i = 0$ implies a chocie of $\R$ on the $i$'th qubit and $\Theta_i = 1$ implies $\M$.
    \item The trusted server, on receipt of the $C$ system, simply measures all $N$ qubits in the $X$ basis, and \emph{publicly reports the outcome}.  Note that in the e-QRNG case, the server is honest in this measurement and, in fact, this $C$ register measurement may even be done by Alice.  Note that this may be done before, in parallel to, or after Alice's operations in step 2.
\end{enumerate}

We comment that this e-QRNG protocol is highly inefficient due to the high probability of aborting (regardless of the noise level).  \emph{However, this e-QRNG protocol is not one users would actually run, but rather only used as a theoretical tool to prove security of the actual SQRNG protocol (which only aborts if the noise level is too high).}  It is important to note that the protocol completely aborts if Alice observes $\ket{-}$ on her $\R$ test case.  The reason for this is that, as we show later, an observation of $\ket{+}$ will exactly simulate a true $\R$ in the SQRNG case whereas an observation of $\ket{-}$ produces a quantum system that cannot exist in the SQRNG case and, so, should not be analyzed.

\begin{theorem}\label{thm:reduction}
  Let $\mathcal{E}$ be an attack against the SQRNG protocol (which, itself, consists of an initial state description and a quantum instrument to apply after Alice's operation) and let $\Theta$ be Alice's choice of operation in the SQRNG protocol.  Let $\ket{\psi(\mathcal{E},\Theta)}_{AME}$ be the resulting state of operating the purified SQRNG protocol under these conditions (where $A$ is Alice's register; $E$ is Eve's private ancilla; and $M$ is the register storing the classical message sent from the adversarial server to Alice).  Then, there exists a quantum state $\ket{\tau(\mathcal{E})}_{ACE}$ for the e-QRNG protocol, depending only on $\mathcal{E}$, such that: (1) the probability of Alice not aborting, assuming she measures those qubits in $A$ with $\Theta_i = 0$, (i.e., she accepts $\ket{\tau}$) is non-zero and exactly $p_a = 1/2^{ct_0(\Theta)} > 0$ for any $\Theta$;  (2) conditioning on accepting, let $\ket{\tau(\mathcal{E}, \Theta)}$ be the resulting quantum state (which clearly does depend on $\Theta$), then it holds that $\ket{\tau(\mathcal{E},\Theta)} = \ket{\psi(\mathcal{E}, \Theta)}$; and (3) if the SQRNG protocol is attacked using a collective attack (namely the produced state is a product state), then the constructed state $\ket{\tau}$ is also a product state $\ket{\tau} = \ket{\tau_0}^{\otimes N}$ (the result of a collective attack) and, furthermore, the probability of observing a $\ket{-}$ on any individual signal in the e-QRNG case is exactly $1/2$.
\end{theorem}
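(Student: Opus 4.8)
The plan is to use the source-replacement (entanglement-based) technique, adapting the reduction developed for the mediated SQKD setting. First I would write the purified SQRNG dynamics explicitly: the attack $\mathcal{E}$ supplies an initial pure state $\ket{\psi_0}_{SE}$ (the $N$ signal qubits $S$ together with Eve's ancilla $E$) and, after Alice's operation, a dilation of her quantum instrument to an isometry $U$ mapping the returned signal and $E$ into a classical message register $M$ (read out in the computational basis) and an updated ancilla. Writing $V_\Theta$ for Alice's coherent operation --- identity on each reflected qubit ($\Theta_i = 0$) and a CNOT onto a fresh $\ket{0}$ target on each measured qubit ($\Theta_i = 1$), with all destructive measurements deferred to the end --- the purified final state is $\ket{\psi(\mathcal{E},\Theta)}_{AME} = U V_\Theta (\ket{\psi_0}_{SE} \otimes \ket{0}_A)$, and all of Alice's random bits and the reported message are obtained by measuring $A$ (in $Z$) and $M$ (in the computational basis) only at the very end.

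The core construction is the $\Theta$-independent source $\ket{\tau(\mathcal{E})}_{ACE}$. I would build it by replacing each honest preparation of $\ket{+}$ with one half of a maximally entangled pair shared with the trusted server: the server's half lives in $C$, and Eve's adversarial isometry $U$ together with a Hadamard basis change is folded into the joint state so that a single fixed honest $X$-basis measurement of $C$ reproduces exactly Eve's adversarial $N$-bit message. The essential point is that because the shared pairs are maximally entangled, the source need not commit to $\Theta$: Alice recovers the reflected-round dynamics by choosing an $X$-measurement (projecting the corresponding signal onto $\ket{+}$, which is precisely what a true reflection of $\ket{+}$ leaves behind) and the measured-round dynamics by choosing a $Z$-measurement (reproducing the CNOT-then-measure of $V_\Theta$). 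Since Eve only acts on the $C$ and $E$ halves, Alice's marginal $\rho_A$ stays maximally mixed, $\rho_A = I/2^N$; I would use this to establish claim (1), because the acceptance projector $\Pi_+ = \bigotimes_{i : \Theta_i = 0}\kb{+}_{A_i}$ then satisfies $\mathrm{tr}(\Pi_+ \rho_A) = \mathrm{tr}(\Pi_+)/2^N = 1/2^{ct_0(\Theta)} > 0$ for every $\Theta$.

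For claim (2) I would carry out the conditioning computation directly: apply the server's $X$-measurement of $C$ (yielding message $M$) together with Alice's acceptance projector $\Pi_+$, and show the renormalized state equals $\ket{\psi(\mathcal{E},\Theta)}$. Here the maximal-entanglement identity (moving an operator across a Bell pair as a transpose) converts the honest $X$-readout of $C$ into the computational readout of $M$, the reflected-round projections onto $\ket{+}$ reproduce the identity action of a genuine reflection, and the measured-round $Z$-measurements reproduce the CNOT structure, so that the folded-in $U$ reassembles into $U V_\Theta$ acting on $\ket{\psi_0}$. Claim (3) then follows because the construction is applied round by round: a collective (product) attack $\ket{\psi_0} = \ket{\sigma_0}^{\otimes N}$ with round-wise $U$ yields a product $\ket{\tau} = \ket{\tau_0}^{\otimes N}$, and the per-qubit maximally mixed marginal $\rho_{A_i} = I/2$ forces the probability of observing $\ket{-}$ on any single signal to be exactly $1/2$.

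The main obstacle is the construction and verification underlying claim (2): producing a single $\Theta$-independent source whose honest $C$-measurement simulates Eve's adversarial message, and then checking that conditioning on acceptance reassembles exactly $U V_\Theta \ket{\psi_0}$, rather than merely some state with the same observable statistics. Getting the Hadamard basis changes, the Bell-pair transpose moves, and the normalization to line up precisely --- while confirming that the $\Theta$-dependence genuinely migrates out of the fixed source and into Alice's choice of measurement basis --- is the delicate heart of the argument.
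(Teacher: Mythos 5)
Your high-level plan --- defer Alice's measurements, dilate the instrument to an isometry $U$, build a $\Theta$-independent source whose honest $C$-measurement reproduces Eve's message, and simulate $\R$ by an $X$-basis measurement post-selected on $\ket{+}$ and $\M$ by a $Z$-basis measurement --- is exactly the paper's strategy. But the specific source you construct is wrong, and the error propagates through all three claims. You build $\ket{\tau}$ from maximally entangled pairs and conclude $\rho_A = I/2^N$ because ``Eve only acts on the $C$ and $E$ halves.'' That marginal is invariant under any operation on $C\otimes E$, so if $\rho_A = I/2^N$ then after applying the acceptance projector (which acts only on the reflected positions of $A$) the measured positions of $A$ remain exactly maximally mixed, i.e., Alice's raw bits are forced to be uniform. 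This contradicts claim (2): under the attack in which Eve simply sends $\ket{0}^{\otimes N}$, the SQRNG state $\ket{\psi(\mathcal{E},\Theta)}$ has Alice's measured bits deterministically $0$. More basically, the adversarial preparation $\ket{\psi_0} = \sum_a\alpha_a\ket{a}_S\ket{E_a}$ has an arbitrary signal marginal, and no isometry applied to the non-$A$ half of a maximally entangled state can produce anything whose $A$-marginal differs from $I/2^N$. There is no honest ``$\ket{+}$ preparation'' to source-replace here --- the source itself is the adversary.

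The repair is the paper's construction: take $\ket{\tau}_{ACE} = \sum_a\alpha_a\ket{a}_A\otimes\ket{G_a}_{CE}$ with $\ket{G_a} = \sum_m\ket{m,F_{a,m}} = U\ket{a,E_a}$, i.e., Alice's share carries the adversary's own computational-basis amplitudes, and only the return isometry is folded into the $CE$ branch. Claim (1) then holds not because $\rho_A$ is maximally mixed but because the $\ket{G_a}$ are orthonormal ($U$ is an isometry on orthonormal inputs), so $\rho_A = \sum_a|\alpha_a|^2\kb{a}$ is \emph{diagonal} in the computational basis and each reflected qubit yields $\ket{+}$ with probability $|\braket{+|a_i}|^2 = 1/2$ independently of $a_i$, giving $p_a = 1/2^{ct_0(\Theta)}$ for arbitrary $\alpha_a$; the same per-signal diagonality gives the $1/2$ in claim (3). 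Claim (2) is then a direct expansion: writing the reflected positions of $\ket{a}_A$ in the $X$ basis, the $\ket{+\cdots+}$ component has amplitude $2^{-ct_0(\Theta)/2}$ in every branch, so conditioning on acceptance restores the coherent sum over the reflected bits and yields exactly $\sum_a\alpha_a\ket{a\wedge\Theta}_A\otimes\sum_m\ket{m,F_{a,m}}$. No Bell-pair transpose identity is needed, and none is available in this adversarial-source setting.
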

\begin{proof}
Let's consider the (purified) SQRNG protocol.  In this case, a general attack consists of Eve first preparing an arbitrary quantum state $\ket{\psi_0}_{AE}$ which, without loss of generality, we may write as:
\[
\ket{\psi_0} = \sum_{a \in \{0,1\}^N}\alpha_a\ket{a}\ket{E_a},
\]
where the $\ket{E_a}$ are arbitrary normalized states in Eve's private ancilla.  The $A$ portion is sent to Alice who chooses $\Theta$ which dictates whether to $\M$ or to $\R$.  Whenever $\Theta_i = 1$ she will apply a CNOT gate as discussed.  Since her ancilla is initially cleared to a zero state, the result of this action on $\ket{\psi_0}$ is easily seen to be:
\[
\ket{0\cdots 0}_A\otimes\ket{\psi_0}\rightarrow \ket{\psi_0'} = \sum_{a\in\{0,1\}^N}\alpha_a\ket{a\wedge\Theta,a}_{AT}\ket{E_a},
\]
where $a\wedge\Theta$ denotes the bit-wise AND operation and the $\ket{a\wedge\Theta}_A$ is Alice's private ancilla after applying CNOT gates whenever $\Theta_i = 1$.  The $T$ (Transit) register represents the quantum state that will return to Eve.

At this point, the $T$ register will return to Eve's control who applies a quantum instrument to the system.  As shown in \cite{krawec2015mediated}, this is equivalent to applying an isometry $U$ mapping Eve's ancilla and the Transit register into a quantum ancilla for Eve and a Hilbert space spanned by all possible classical messages that could have been sent (namely, a Hilbert space of dimension $2^N$ since for every round there are only two classical messages Eve is allowed to send).  After applying the isometry $U$ to the returning system and Eve's ancilla from her initial state preparation, she measures the message Hilbert space. The measurement outcome determines the message transcript she sends to Alice while the post measured system represents her quantum ancilla in the event she had sent that message using the quantum instrument attack.%  For a proof that applying a general quantum instrument to the returning system is equivalent to this isometry attack, the reader is referred to \cite{krawec2015mediated}.

Without loss of generality, we may define $U$'s action on basis states as follows:
\[
U\ket{a, E_a} = \sum_{m\in\{+,-\}^N}\ket{m, F_{a,m}}_{M,E},
\]
where the $M$ register is Eve's classical message.  Note that $U$'s action on basis states of the form $\ket{a, E_b}$ for $a \ne b$ may be arbitrary as these do not appear in the system under investigation.

Applying $U$ to the returning state $\ket{\psi_0'}$ yields:
\begin{equation}\label{eq:real-state}
  \ket{\psi} = \sum_{a\in\{0,1\}^N}\alpha_a\ket{a\wedge\Theta}\otimes\sum_{m\in\{+,-\}^N}\ket{m, F_{a,m}}_{cl,E}
\end{equation}
Eve will then measure the $M$ register dictating her message.

We now claim that there is an equivalent attack strategy against the e-QRNG protocol, producing an identical quantum system in the event the entanglement based protocol does not abort.  That is, any attack against the SQRNG protocol can be translated to an equivalent attack against the e-QRNG protocol and, therefore, if the e-QRNG protocol is secure, the SQRNG protocol must also be (since e-QRNG can only have more attacks against it).  

Consider, now, the entanglement based protocol.  Assume Eve prepares the initial state:
\begin{align}
  \ket{\tau}_{ACE} &= U\left(\sum_{a\in\{0,1\}^N}\alpha_a\ket{a, E_a}\right)\notag\\
  &= \sum_{a\in\{0,1\}^N}\alpha_a\ket{a}_A \otimes \sum_{m\in\{+,-\}^N}\ket{m, F_{a,m}}_{CE}.\label{eq:entangled-state}
\end{align}
The $A$ and $C$ registers are sent to Alice and the trusted server respectively (as discussed, the trusted server may in fact be Alice in the e-QRNG protocol case).  

We first show the second claim of the theorem namely that, conditioning on Alice accepting the e-QRNG state, it agrees with the SQRNG state.  For this, we note that we can decompose the sum over $a\in\{0,1\}^N$ in $\ket{\psi}$ (Equation \ref{eq:real-state}) into two parts: a part where $\Theta_i = 0$ and a part where $\Theta_i = 1$.  In particular, we can write $a = \pi_\Theta(a_0, a_1)$ for some permutation $\pi$ depending on $\Theta$.  Here, $|a_0| = ct_0(\Theta)$ and $|a_1| = ct_1(\Theta)$.  The function $\pi_\Theta$ simply maps the first argument into the appropriate bit-position of $a$ where $\Theta$ is zero; similarly for the second argument.  Thus Equation \ref{eq:real-state} (the SQRNG case) becomes:
\begin{equation}
  \ket{\psi} = \sum_{a_0, a_1}\alpha_{\pi_\Theta(a_0,a_1)}\ket{\pi_\Theta(0\cdots 0,a_1)}\otimes\sum_m\ket{m, F_{\pi_\Theta(a_0,a_1),m}}.
\end{equation}
Of course the sum above is over all $a_0\in\{0,1\}^{ct_0(\Theta)}$ and $a_1\in\{0,1\}^{ct_1(\Theta)}$.

Now, we may similarly decompose the initial state prepared by Eve in the e-QRNG case (Equation \ref{eq:entangled-state}) as follows:
\begin{align*}
  \ket{\tau} &= \sum_{a_0,a_1} \alpha_{\pi_\Theta(a_0,a_1)}\ket{\pi_\Theta(a_0,a_1)}\otimes\sum_m\ket{m,F_{\pi_\Theta(a_0,a_1)}}\\\\
  &=\frac{1}{\sqrt{2^{m}}}\sum_{a_0,a_1}\alpha_{\pi_\Theta(a_0,a_1)}\ket{\pi_\Theta(+^m, a_1)} \sum_m\ket{m, F_{\pi_\Theta(a_0,a_1)}}\\
  &+ \ket{\mu}_{AE}\\\\
  &\cong \frac{1}{\sqrt{2^{ct_0(\Theta)}}}\ket{\psi} + \ket{\mu}.
\end{align*}
where $m = ct_0(\Theta)$, $+^m = +\cdots +$ ($m$ times), and $\ket{\mu}_{AE}$ is some state that has at least one $\ket{-}$ where $\Theta_i = 0$ in Alice's register (and, thus, would lead to the protocol aborting).  It is clear, then, that conditioned on Alice not aborting the e-QRNG protocol (i.e., accepting the state $\ket{\tau}$), the state collapses to $\ket{\psi}$, the same state that would have been produced if the SQRNG protocol had been run.

It is also clear that, since $\bk{\psi} = 1$ and $\braket{\psi|\mu} = 0$, the probability of Alice not aborting is strictly positive (and, thus, we do not condition on a probability zero event).  In fact, the probability of not aborting (i.e., accepting) is exactly $\frac{1}{\sqrt{2^{ct_0(\Theta)}}}$ proving claim (1) of the theorem.

Finally, to prove claim (3), we note that if Equation \ref{eq:real-state} were produced by a collective attack, then Equation \ref{eq:entangled-state} would be a product state also and the probability of accepting will remain $\frac{1}{\sqrt{2^{ct_0(\Theta)}}}$ implying that the probability of accepting any particular signal is exactly $1/2$ completing the proof.
\end{proof}

%% compress
Theorem \ref{thm:reduction} implies that any attack against the SQRNG protocol (whose security we want to prove) can be translated to an attack against the e-QRNG protocol which: (1) produces the same quantum system for Alice and Eve conditioned on Alice accepting the e-QRNG state (thus any entropy computation will be identical and any observed statistics will also be identical in the accepting case); and (2) the probability of accepting is strictly positive and known.  Note that, even though the e-QRNG protocol is highly inefficient, this is not relevant as we are only interested in bounding the quantum entropy of the e-QRNG protocol conditioned on a non-abort.  This will translate directly to a bound on the entropy of the SQRNG protocol (which never aborts, unless Alice determines the noise is too high - a threshold which we can compute later).  Thus, even though the e-QRNG protocol is highly inefficient, this does not matter as it is only a theoretical tool for the security proof and not an actual protocol to run in practice.  Note, also, that there are many more attacks against the e-QRNG protocol, including attacks which would cause it to always abort; however analyzing those ``denial of service'' attacks are not relevant as they would never appear in the SQRNG protocol.

%Second, we claim that the probability of not aborting is stricly positive.  The second condition is important as, otherwise, there would be attacks against the SQRNG protocol which would translate to attacks against the e-QRNG protocol which always abort.  We prove this is not the case and, therefore, by analyzing the seucrity of e-QRNG, conditioning on it not aborting, we will cover all possible attacks against the actual SQRNG protocol.  There can be more attacks against e-QRNG (including attacks that cause it to always abort) - however such attacks cannot appear in the actual SQRNG protocol and so they do not matter to the security of the semi-quantum protocol.

\subsection{Secure Bit Rate Analysis}
Our goal is to derive an asymptotic bit generation rate for the SQRNG protocol.  Consider a run of the SQRNG protocol where Eve employed some (unknown) attack described by $\mathcal{E}$, and $\Theta$ was Alice's choice of operations resulting in state $\ket{\psi_{SQRNG}}$.  From Theorem \ref{thm:reduction}, there exists an equivalent quantum state $\ket{\psi}$ produced by the e-QRNG protocol.  We will derive a bit generation rate for this e-QRNG state which will translate directly to a bit-generation rate for the SQRNG protocol.  Since $\mathcal{E}$ and $\Theta$ were arbitrary, our method will work for any attack and choice of $\Theta$ for the SQRNG protocol thus proving the SQRNG protocol secure.

%More formally, let $m = ct_0(\Theta)$ and let $\ket{\psi}$ be the state constructed in Theorem \ref{thm:reduction}.  We know that the probability of accepting the state (i.e., not aborting) is $p_a = 1/2^m$.  Let $\ket{\psi_a}$ be the state conditioning on the e-QRNG protocol not aborting.  Then Theorem \ref{thm:reduction} guarantees that (1) all observable statistics in $\ket{\psi_{SQRNG}}$ will match those observed in the conditional state $\ket{\psi_a}$ and (2) that $S(A|E)_{\psi_{SQRNG}} = S(A|E)_{\psi_a}$.

We first assume collective attacks for the SQRNG protocol (and thus, by condition (3) of Theorem \ref{thm:reduction} also for the e-QRNG protocol state); namely, the state $\ket{\psi}$ may be described as a product state $\ket{\psi} = \ket{\mu}^{\otimes N}$ with the probability of accepting any particular signal state $\ket{\mu}$ is $1/2$  (i.e., the probability of Alice observing a $\ket{+}$ in $\ket{\mu}$ is $1/2$) .

Now, let's consider an individual signal state $\ket{\mu}$.  We may write this in the most general way as follows:
\begin{equation}\label{eq:collective-state}
\ket{\mu} = \sum_{a,c\in\{0,1\}}\ket{a,c}_{AC}\otimes\ket{e_{a,c}}_E,
\end{equation}
where the $\ket{e_{a,c}}$ states are arbitrary (not necessarily normalized nor orthogonal) states in Eve's ancilla.  Note that, when $c = 0$ in the summation we actually mean a state of $\ket{+}$ while $c=1$ implies $\ket{-}$.  Our goal now is to compute a bound on $S(A|E)_\mu$ since this will give us our bit-rate according to Equation \ref{eq:entropy-bit-rate}. Our bound, of course, must be a function only of those statistics which may be observed by the users in the event the protocol does not abort.

First, consider $P_{a,c}^{AC} = Pr(A=a \wedge C=c)$.  From Equation \ref{eq:collective-state}, this is easily seen to be $P_{a,c}^{AC} = \bk{e_{a,c}}$.  Ideally, this should be $1/4$, though we do not assume anything about these values in our proof only that they may be observed.  Next, consider $P_{+|acc} = Pr(C =``+'' | \text{accept})$.  Changing basis, we have:
\[
\ket{\mu} = \frac{1}{\sqrt{2}}\ket{+}_A\otimes\left(\sum_c\ket{c}(\ket{e_{0,c}} + \ket{e_{1,c}})\right) + \frac{1}{\sqrt{2}}\ket{-}(\cdots)
\]
Since by Theorem \ref{thm:reduction} we know the probability of accepting (i.e., measuring a $\ket{+}$) is $1/2$ for each signal state independently, the conditional state collapses to:
\[
\ket{+}_A\otimes\left[\ket{0}_C(\ket{e_{0,0}} + \ket{e_{1,0}}) + \ket{1}_C(\ket{e_{0,1}} + \ket{e_{1,1}})\right],
\]
from which it is clear that $P_{+|acc} = \bk{e_{0,0}} + \bk{e_{1,0}} + 2Re\braket{e_{0,0}|e_{1,0}}$.  Similarly, we may define $P_{-|acc}$ and find it to be $P_{-|acc} = \bk{e_{0,1}} + \bk{e_{1,1}} + 2Re\braket{e_{0,1}|e_{1,1}}$.  Note that these probabilities coincide directly with the probability the server sends the message ``$+$''/``$-$'' conditioned on Alice choosing $\R$ in the SQRNG case.

We are now in a position to compute the bit generation rate.  From Equation \ref{eq:collective-state}, the system, conditioning on Alice actually distilling a random bit (namely she measures the $A$ register of $\ket{\mu}$ in the $Z$ basis) is easily found to be:
\begin{equation}
\rho_{ACE} = \kb{0}_A\otimes\left(\sum_c\kb{c, e_{0,c}}\right) + \kb{1}_A\otimes\left(\sum_c\kb{c,e_{1,c}}\right)
\end{equation}
Of course, Eve has access to the $C$ and and $E$ registers (since the trusted server makes its measurement results public); thus to compute the bit generation rate, we need to bound $S(A|CE)$.  Using Theorem 1 from \cite{QKD-Tom-Krawec-Arbitrary}, along with our analysis above of Eve's inner-products, we have the following lower-bound:
\begin{align}
S(A|CE) &\ge \left(P_{0,0}^{AC} + P_{1,0}^{AC}\right)\cdot\left(h\left[\frac{P_{0,0}^{AC}}{P_{0,0}^{AC}+P_{1,0}^{AC}}\right] - h\left[\lambda_0\right]\right)\label{eq:entropy-bound}\\
&+\left(P_{0,1}^{AC} + P_{1,1}^{AC}\right)\cdot\left(h\left[\frac{P_{0,1}^{AC}}{P_{0,1}^{AC}+P_{1,1}^{AC}}\right] - h\left[\lambda_1\right]\right),\notag
\end{align}
where:
\begin{equation}
\lambda_c = \frac{1}{2}\left(1 + \frac{\sqrt{(P_{0,c}^{AC} - P_{1,c}^{AC})^2 + 4Re^2\braket{e_{0,c}|e_{1,c}}}}{P_{0,c}^{AC} + P_{0,c}^{AC}}\right).
\end{equation}
The inner products needed to compute $\lambda_c$ may be determined from $P_{c|acc}$.  For instance, $2Re\braket{e_{0,0}|e_{1,0}} = P_{+|acc} - P_{0,0}^{AC} - P_{1,0}^{AC}$.  This gives us everything we need to compute the quantum entropy of the e-QRNG state conditioned on the protocol not aborting.  Since the state for the entanglement based protocol conditioned on not aborting is identical to that of the SQRNG protocol of the given, but arbitrary, attack and, furthermore, since all observable statistics in that case are also identical, the computed bit generation rate applies to the SQRNG protocol as desired.

%\subsection{General Attacks}

The above rate applies to collective attacks against the SQRNG protocol in the asymptotic scenario.  However, for general attacks, the constructed e-QRNG state may be made permutation invariant in the usual way \cite{renner2008security} and then, de Finetti style arguments \cite{konig2005finetti} may be used to promote the above analysis to general attacks.  %%This argument cannot be easily made in the finite key case as the number of signals must be significantly larger than $ct_0(\Theta)$ to ensure that the post-measured state conditioned on acceptance (i.e., all $m$ $\R$ instances project to $\ket{+}$) is still $\epsilon$-close to a product state.  In the asymptotic scenario this is not an issue as the total number of signals is arbitrarily large compared to $m$.  In the finite key setting, this will lead to a need for an unrealistic number of signals to be sent before a secure random string may be generated even in the noiseless case.  We leave a finite key analysis as very interesting future work, though we think our reduction method can still be used in that case, but with care.

\section{Evaluation}

To evaluate the bit generation rate of our SQRNG protocol, one only needs to observe those probability values appearing in the entropy expression Equation \ref{eq:entropy-bound}.  For the purposes of this paper, we simulate observable probability values assuming the source noise is modeled by a depolarization channel acting independently on each qubit and that the server is honest.  Note that this assumption is only used in this section to determine values for those probability values appearing in our bit generation expression above.  That is, this assumption is \emph{only used here for evaluation purposes and is not a required assumption in our security proof above.}  Normally, one would simply observe the actual probability values; however since we are performing a theoretical analysis and not an experiment, we must simulate ``reasonable'' values for these probabilities.  Depolarization channels are the most common ones evaluated theoretically.

Such a channel takes a qubit quantum state $\rho$ and maps it to $\mathcal{D_Q}:\rho \mapsto (1-2Q)\rho + Q\cdot I$, where $I$ is the identity operator (an equally mixed state of $\ket{+}$ and $\ket{-}$).  Of course, since we have a two-way channel, we will need to worry about the noise and behavior of the channel in both directions.  For this, we consider two scenarios: \emph{independent} channels and \emph{dependent} ones.  The distinction only matters when Alice chooses to $\R$ since, then, it is possible that noise in the reverse channel depends on noise in the forward one.  When Alice chooses $\M$, the qubit is measured so any channel dependence is broken.  More formally, we parameterize the noise in the forward and reverse channels by $Q$.  But in the event Alice chooses $\R$, the joint forward/reverse channel is modeled as a depolarization channel with parameter $Q_{FR}$.  For independent channels, the noise in the reverse channel acts independently of the forward for reflection events and, so, we set $Q_{FR} = 2Q(1-Q)$.  In the case of dependent channels, the noise in the reverse channel can depend on the forward and so we set $Q_{FR} = Q$.  Note that this behavior often appears in fiber implementations where any phase noise picked up in the forward direction is ``undone'' in the reverse channel assuming the photon is reflected back \cite{lucamarini2014quantum,beaudry2013security}.

From this parameterization, we may determine values for the needed probability values.  In particular, we will assume the state arriving at Alice's lab is of the form $\mathcal{D}_Q(\kb{+})$.  From this, it is clear that the probability of Alice measuring either a $\ket{0}$ or $\ket{1}$ is $1/2$.  Conditioned on such a measurement, a qubit returns to the server; the state, then, arriving at the server will be $\mathcal{D}_Q(\kb{a})$, where $a\in\{0,1\}$ is Alice's measurement outcome.  Since we are simulating an honest server in this section and a noisy channel, the probability that the server sends the message ``$+$'' is also $1/2$.  Thus, we find $P_{a,c}^{AC} = 1/4$ for all $a,c$.

Next, consider the case of a reflection.  In this case, the state returning to the server is of the form $\mathcal{D}_{Q_{FR}}(\kb{+})$.  From this, we see that $P_{+|acc} = (1-Q_{FR})$.  Note that $P_{+|acc}$ is technically defined only for the e-QRNG protocol, however from Theorem \ref{thm:reduction} it exactly coincides with the probability that the (potentially adversarial) server sends the message ``$+$'' conditioned on Alice choosing $\R$ in the SQRNG case.  This gives us everything we need to compute Equation \ref{eq:entropy-bound}.  In fact, under these conditions, our entropy bound simplifies significantly to:
\begin{equation}
S(A|CE) \ge 1 - h\left(1-Q_{FR}\right).
\end{equation}
This can be seen by noting that all $P_{a,c}^{AC}$ are $1/4$ and that $\lambda_0 = \lambda_1 = 1-Q_{FR}$ since $4Re^2\braket{e_{0,0}|e_{1,0}} = (1-Q_{FR} - 1/2)^2$ and $4Re^2\braket{e_{0,1}|e_{1,1}} = (-Q_{FR} + 1/2)^2$.  A graph of the resulting bit generation rate is shown in Figure \ref{fig:bit-rate}.  Rather interestingly, in the event of a dependent channel, the SQRNG protocol matches the bit-generation rate of the fully-quantum QRNG protocol introduced in \cite{vallone2014quantum}, at least in the asymptotic ideal qubit scenario. 

\begin{figure}
    \centering
    \includegraphics[width=.45\textwidth]{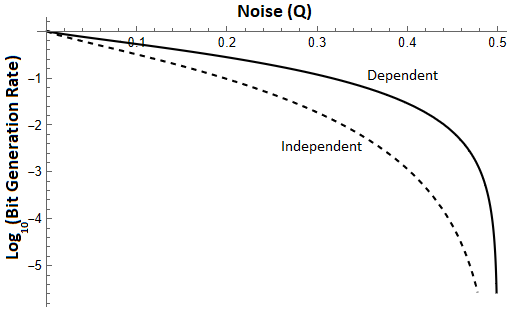}
    \caption{Showing the bit-generation rate of our SQRNG protocol under depolarization noise.  Both the dependent and independent channel cases are plotted.}
    \label{fig:bit-rate}
\end{figure}

\section{Closing Remarks}

We developed a novel, and to our knowledge the first, semi-quantum random number generation protocol.  We also derived a rigorous information theoretic proof of security for the protocol, by reducing it to an entanglement-based protocol and deriving a bound on its random bit generation rate.  Our evaluations showed that the bit-generation rate can match that of other fully-quantum QRNG protocols, at least in the asymptotic ideal-qubit scenario.  Combined with recent research in semi-quantum key distribution, which shows that the asymptotic behavior of SQKD protocols can match BB84 \cite{SQKD-survey}, our work in this paper shows similarly optimistic results for the QRNG case.

Many interesting future problems remain open.  Of great importance would be to study SQRNG protocols under more practical considerations.  Here certain practical device attacks are problematic such as the photon-tagging attack \cite{SQKD-photon-tag,SQKD-photon-tag-comment}.  However, mirror style devices may mitigate this \cite{boyer2017experimentally}.  Adapting mirror-style devices to the SQRNG case would be an interesting problem.  Also of importance would be deriving a bound on the random bit generation rate in the finite signal setting (as opposed to the asymptotic case considered here).  Our reduction Theorem \ref{thm:reduction} applies here, however when analyzing the finite key scenario, one requires bounds on the quantum min entropy which is harder to derive.  Thus, we leave this as an open problem, though our reduction in Theorem \ref{thm:reduction} should prove a valuable tool in that investigation.

\section*{Acknowledgment}
WOK would like to acknowledge support from NSF grant number 2143644.

%\bibliographystyle{unsrt}
%\bibliography{local}

\end{document}